\newtheorem{theorem}{Theorem}
\title{MARPO: A Reflective Policy Optimization for Multi-Agent Reinforcement Learning}
\author{
    Cuiling Wu\textsuperscript{\rm 1,\rm 2,}\equalcontrib,
    Yaozhong Gan\textsuperscript{\rm 2,}\equalcontrib,
   Junliang Xing\textsuperscript{\rm 2,}\footnotemark[2],
    Ying Fu\textsuperscript{\rm 1,}\thanks{Co-corresponding author.}
}
\begin{document}

\maketitle

\begin{abstract}
We propose \textbf{M}ulti-\textbf{A}gent \textbf{R}eflective \textbf{P}olicy \textbf{O}ptimization (\textbf{MARPO}) to alleviate the issue of sample inefficiency in multi-agent reinforcement learning. MARPO consists of two key components: a \textbf{reflection mechanism} that leverages subsequent trajectories to enhance sample efficiency, and an \textbf{asymmetric clipping mechanism} that is derived from the KL divergence and dynamically adjusts the clipping range to improve training stability. We evaluate MARPO in classic multi-agent environments, where it consistently outperforms other methods. 
\end{abstract}

\section{Introduction}

Multi-Agent Reinforcement Learning (MARL) has become a vital tool for complex decision-making tasks across areas such as autonomous driving, robotics, and cooperative games. Despite its potential, MARL faces challenges, particularly in achieving sample efficiency in high-dimensional environments. Unlike supervised learning, which uses static datasets, reinforcement learning (RL) relies on agent-environment interactions to collect samples—a costly process in both time and resources. 

Early RL advancements, such as Trust Region Policy Optimization (TRPO)~\citep{schulman2015trust} and Proximal Policy Optimization (PPO)~\citep{schulman2017proximal}, have improved stability; however, they still face scalability and adaptability issues in real-world environments. Many MARL algorithms, including IPPO~\citep{de2020independent}, MAPPO~\citep{yu2022surprising}, QMIX~\citep{rashid2020monotonic}, HATRPO, and HAPPO~\citep{kuba2021trust}, rely on frequent interactions with the environment and optimize policies based on individual state-action pairs, often underutilizing trajectory-level information that can enhance policy stability.

Recent research explores methods to improve sample efficiency. Trajectory Reuse Optimization, such as R2D2~\citep{kapturowski2018recurrent}, accelerates convergence by prioritizing experience replay and reusing long sequences. Model-based methods, such as MuZero~\citep{silver2017mastering}, reduce real-world interactions by simulating environment dynamics internally. Additionally, innovations in Sampling Policy and Reward mechanisms, such as Prioritized Experience Replay (PER)~\citep{schaul2015prioritized} and Inverse Reinforcement Learning (IRL)~\citep{ng2000algorithms}, prioritize high-value experiences to improve learning efficiency. Frameworks such as Efficient Episodic Memory Utilization (EMU)~\citep{na2024efficient} and Episodic Multi-agent Reinforcement Learning with Curiosity-driven exploration (EMC)~\citep{zheng2021episodic} further enhance exploration and policy convergence.



While advancements in MARL have been made, challenges in sample efficiency, scalability, and trajectory-level utilization remain. To address these, we introduce \textbf{Multi-Agent Reflective Policy Optimization (MARPO)}, a framework that leverages trajectory feedback to improve policy optimization efficiency. Unlike methods that rely on auxiliary components such as value functions or human feedback~\citep{deng2024improving, haarnoja2018soft,he2025decode}, MARPO directly optimizes policies using trajectory signals, thereby enhancing learning efficiency and decision-making. We propose an asymmetric KL-divergence-based clipping mechanism with a dynamic adjustment strategy for the clipping range, further improving training stability and flexibility. We validate MARPO on the StarCraft II Multi-Agent Challenge (SMAC)~\citep{samvelyan19smac}, including its more complex SMAC-Hard variants~\citep{deng2024smach}, SMACv2~\citep{ellis2023smacv2}, and Google Research Football (GRF)~\citep{kurach2020google}, demonstrating superior performance over MAPPO in terms of sample efficiency and policy optimization.

\textbf{Our contributions can be summarized as follows:}
\begin{itemize}
    \item We propose MARPO, the first framework to integrate reflection mechanisms into multi-agent policy optimization, improving sample efficiency by leveraging trajectory feedback.
    \item We derive a KL-based asymmetric clipping mechanism that enables more accurate and flexible policy updates.
    \item We introduce a dynamic adaptation strategy for the clipping range, which enhances the exploration capability compared to fixed-boundary approaches such as PPO.
\end{itemize}

\section{Related Work}

\subsection{KL-based Asymmetric Clipping Mechanism}

Trust Region Policy Optimization (TRPO)~\citep{schulman2015trust} ensures stable policy updates by using a trust-region constraint that prevents substantial updates that could destabilize the learning process. It optimizes the policy using a second-order approximation, specifically the Fisher information matrix, to account for the impact of policy updates on the value function~\citep{prokopenko2011relating}. The goal is to maximize expected return while keeping the KL divergence between the old and new policies within a predefined threshold, ensuring steady progress and stability during training.

Proximal Policy Optimization (PPO)~\citep{schulman2017proximal}, an extension of TRPO, simplifies computation by using first-order approximations and a clipping mechanism to limit deviations between the new and old policies. This approach improves sample efficiency and computational feasibility, making PPO scalable to large-scale problems. However, the fixed clipping range in PPO is a limitation in dynamic MARL environments, where continuous interactions between agents require more adaptive mechanisms~\citep{massaoudi2025adaptive,liu2024jointppo}. This has led to research on dynamic clipping mechanisms, such as Trust Region-Guided Proximal Policy Optimization (TRGPPO)~\citep{wang2019trust}, which adjusts the clipping range based on KL divergence. This approach enables larger updates in regions with lower KL divergence, promoting exploration, while restricting updates in regions with higher KL divergence to maintain stability. Although it improves sample efficiency and exploration, it relies on careful tuning of the KL threshold.

\subsection{Reflective Policy Optimization}

Reflective Policy Optimization (RPO)~\citep{pmlr-v235-gan24b} extends on-policy reinforcement learning methods such as TRPO and PPO by integrating future-trajectory feedback into policy updates. Although PPO is widely adopted for its simplicity and empirical robustness, it remains limited in sample efficiency~\citep{xiong2023sample}. By leveraging information from the whole trajectory, RPO stabilizes learning and promotes monotonic improvement, enabling agents to adjust actions with a more introspective, trajectory-aware update rule. Proposed initially for single-agent RL, RPO naturally extends to MARL, where agents jointly reflect on both current and future trajectories to improve coordination and accelerate convergence. Similar ideas have also been explored for multi-agent systems under communication delays~\citep{qin2024virtual}, further demonstrating the stability advantages of future-aware updates.

In MARL, MAPPO~\citep{yu2022surprising} builds on PPO under the CTDE paradigm: agents train with global information while acting from local observations. This mitigates non-stationarity~\citep{hernandez2017survey,canese2021multi} via centralized value functions and stabilized importance-sampling updates. However, its reliance on current state–action pairs restricts the exploitation of future-trajectory structure, limiting stability and scalability in complex environments~\citep{hernandez2017survey, canese2021multi, zhang2025unaligned,jin2025comprehensive}.

Our work unifies these lines by retaining PPO-style practicality, adding KL-guided adaptive clipping to avoid brittle fixed thresholds, and injecting future-trajectory information into the update. This yields steadier improvement under multi-agent non-stationarity while remaining computationally lightweight. Beyond our on-policy MARL focus, proximal principles also appear in federated offline RL—e.g., proximal evaluation under distributed data constraints~\citep{10748547,10410601}—and in continuous-control RL via DDPG-family proximal variants~\citep{10606213}, pointing to scalable pathways for deployment.

\section{Preliminaries}

\subsection{Cooperative MARL Problem Formulation}

In the context of MARL, the challenge of partially observable environments is modeled using the Decentralized Partially Observable Markov Decision Process (Dec-POMDP) ~\citep{oliehoek2016concise}. In the Dec-POMDP framework for cooperative multi-agent tasks, the environment is modeled by a tuple \( G = \langle  \mathcal{N}, S, \mathcal{A}, P, O, r, \gamma \rangle \), where \( S \) is the finite state space, each agent \( i \in \mathcal{N} \equiv \{1, \dots, n\} \) chooses an action \( a_i \in A \),
 \( A \) is the individual action space, and \( P(s' | s, A) \) defines the state transition dynamics, 
\(r:S \times \mathcal{A} \rightarrow \mathbb{R} \text{ is the reward function, and } \gamma \in [0, 1] \text{ is the discount factor.}\)
 Agents use a policy \( \pi_{\theta}(a_i | o_i) \) parameterized by \( \theta \) to produce an action \( a_i \) from the local observation \( o_i \), and jointly optimize the discounted accumulated reward
\(J(\theta) = \mathbb{E}_{A^t, s^t} \left[ \sum_t \gamma^t R(s^t, A^t) \right]\)
where \( A^t = (a^t_1, \dots, a^t_n) \) is the joint action at time step \( t \).

\subsection{Reflection Mechanism}

 Reflective Policy Optimization (RPO) extends the clipped surrogate objective to leverage interactions over multiple steps, aiming to address the substantial data requirements per update common to TRPO and PPO, which often lead to sample inefficiency. 
 The surrogate objective function of RPO is defined as:
\begin{align}
L(\pi,\pi_{\text{old}}) = L^{\text{clip}}_0(\pi,\pi_{\text{old}}) + \alpha L^{\text{clip}}_1(\pi, \pi_{\text{old}}),
\end{align}
where 
\begin{equation}
    \begin{aligned}
    L^{\text{clip}}_0(\pi,\pi_{\text{old}}) = & \, \mathbb{E}_{(s, a)} \left[ \min \left( \rho(a|s) A^{\pi_{\text{old}}}(s, a), \right. \right. \\
    & \, \left. \left. \text{clip}(\rho(a|s), 1-\epsilon, 1+\epsilon) A^{\pi_{\text{old}}}(s, a) \right) \right],
    \end{aligned}
\end{equation}
\begin{equation}
    \begin{aligned}
    L^{\text{clip}}_1(\pi, \pi_{\text{old}}) = & \, \mathbb{E}_{(s, a, s', a')} \left[ \min \left( \rho(a|s) \rho(a'|s') A^{\pi_{\text{old}}}(s', a'), \right. \right. \\
    & \, \left. \left. C(\rho, \rho') A^{\pi_{\text{old}}}(s', a') \right) \right],
    \end{aligned}
    \label{eq:ref_ma}
\end{equation}
and
 \( \rho(a|s) = \frac{\pi(a|s)}{\pi_{\text{old}}(a|s)} \), \( C(\rho, \rho') = \text{clip}(\rho(a|s), 1-\epsilon, 1+\epsilon)  \cdot \text{clip}(\rho(a' | s'), 1-\varepsilon_1, 1+\varepsilon_1) \), $\epsilon$, $\epsilon_1$ and \(\alpha\) are the hyperparameter. 
The essence of the reflection mechanism is captured in Eqn.~(\ref{eq:ref_ma}), and the theoretical foundation of RPO ensures monotonic improvements in policy performance.
 As a result, RPO enhances sample efficiency while maintaining stability, addressing the computational and performance challenges in policy optimization.

\subsection{Challenges of Current Policy Gradient Approaches in MARL}
Extending Policy Gradient methods to MARL presents significant challenges. A simple approach is to use a shared parameter set for all agents and aggregate their trajectories for policy optimization. This strategy, implemented in MAPPO, optimizes the policy parameter \(\theta\) by maximizing an objective function that combines the benefits of centralized training with decentralized execution, ensuring stability and efficiency in multi-agent environments. The objective function is defined as:
\begin{equation}
    \begin{aligned}
    L(\theta) = &\; \frac{1}{n} \textstyle \sum_{i=1}^n \mathbb{E}_{(o_i^k, a_i^k)} \left[
    \min \left( \rho_{\theta, i}^k A_i^k,\,
    \text{clip}(\rho_{\theta, i}^k, 1-\epsilon, \right. \right. \\
    & \left. \left. 1+\epsilon) A_i^k \right)
    \right] + \sigma \cdot \frac{1}{n} \textstyle \sum_{i=1}^n \mathbb{E}_{o_i^k} \left[
    S\left[\pi_{\theta}^k(o_i^k)\right]
    \right],
    \end{aligned}
    \label{eq:final_loss}
\end{equation}
where \( \rho_{\theta, i}^k = \pi_\theta^k(a_i^k \mid o_i^k) / \pi_{\theta_{\text{old}}}^k(a_i^k \mid o_i^k) \) is the policy ratio for agent \(i\), and \( A_i^k \) is the advantage function computed using the Generalized Advantage Estimation (GAE) method~\citep{schulman2015high}. The clip term ensures that policy updates remain within a trust region, thereby mitigating large, destabilizing changes. \( S \) represents the policy entropy, and \( \sigma \) is the entropy coefficient hyper parameter.


\section{Method}
In light of the challenges associated with sample inefficiency, we introduce the Multi-Agent Reflective Policy Optimization (MARPO) framework. Unlike MAPPO, which represents a straightforward extension of PPO to multi-agent environments, MARPO constitutes a natural and principled evolution of RPO. It not only enhances sample efficiency by incorporating a joint reflection mechanism but also improves exploration capabilities via an adaptive clipping mechanism. The details of the algorithm are outlined in Algorithm \ref{alg:MARPO}.

\begin{algorithm}[t]
\caption{MARPO}
\label{alg:MARPO}
\textbf{Input}: Initial policy parameters $\theta$; hyperparameter $\alpha$ \\
\textbf{Parameter}: Number of iterations $n$, number of epochs $K$ \\
\textbf{Output}: Updated parameters $\theta$
\begin{algorithmic}[1]
    \FOR{$\text{iteration} = 1$ to $n$}
        \STATE Collect trajectories for all agents and store them in dataset $D$
        \FOR{$\text{epoch} = 1$ to $K$}
            \STATE Sample mini-batch $b$ from dataset $D$
            \STATE Compute clipping bounds $x_1$ and $x_2$ by Eqn.~(\ref{eq:clip_bound})
            \STATE Compute policy loss $L_{\theta}(\pi, \pi_{\text{old}})$ by Eqn.~(\ref{eq:policy_loss})
            \STATE Update policy parameters: $\theta \gets \theta - \alpha \nabla_\theta L_{\theta}$
        \ENDFOR
    \ENDFOR
\end{algorithmic}
\end{algorithm}

\subsection{Description of the MARPO Method}
We propose a multi-agent framework that combines reflection and clipping to enhance sample efficiency. For clarity of exposition, we omit explicit time-step subscripts when not ambiguous. The overall objective function is defined as:
\begin{align}
    L(\pi, \pi_{\text{old}}) 
    = L_{0}^{\text{clip}}(\pi, \pi_{\text{old}}) + \alpha L_{1}^{\text{clip}}(\pi, \pi_{\text{old}}),
    \label{eq:policy_loss}
\end{align}
where 

\begin{equation}
    \begin{aligned}
    L_{0}^{\text{clip}}(\pi, \pi_{\text{old}}) 
    =\, & \frac{1}{n} \sum^{n}_{i=1} \mathbb{E}_{(o_i^k, a_i^k)} \Big[ \min \Big(
    \rho_{i}^k(a_i^k | o_i^k) \cdot A^{\pi_{\text{old}}^k}, \\ 
    & \text{clip}(\rho_{i}^k(a_i^k | o_i^k), x_1, x_2) \cdot A^{\pi_{\text{old}}^k}
    \Big) \Big],
    \end{aligned}
    \label{eq:multi_agent_clip_loss}
\end{equation}
and 
\begin{equation}
\begin{aligned}
L_{1}^{\text{clip}}(\pi,\pi_{\text{old}})
&= \frac{1}{n}\sum_{i=1}^{n}\mathbb{E}_{(o_i^k,a_i^k,o_i^{k+1},a_i^{k+1})}
   \Big[\min\Big(\\
&\quad \rho_i^k(a_i^k\mid o_i^k)\,
      \rho_i^{k+1}(a_i^{k+1}\mid o_i^{k+1})\, A^{\pi_{\text{old}}^{k}},\\
&\quad c(\rho_i^k,\rho_i^{k+1})\, A^{\pi_{\text{old}}^{k+1}}
\Big)\Big],
\end{aligned}
\end{equation}
where \(\rho_{i}^k\) denotes the probability ratio between the new policy \(\pi\) and the old policy \(\pi_{\text{old}}\) for agent \(i\) at time step \(k\), given the local observation \(o_i^k\) and the action \(a_i^k\) taken. Similarly,
\(
\rho_i^{k+1}(a_i^{k+1} | o_i^{k+1}) = \pi^{k+1}(a_i^{k+1} | o_i^{k+1}) / \pi_{\text{old}}^{k+1}(a_i^{k+1} | o_i^{k+1})
\)
represents the ratio for the action \(a_i^{k+1}\) at the next local state \( o_i^{k+1}\). Additionally, \(A^{\pi_{\text{old}}^k}\) is the advantage function calculated from the old policy by GAE, which helps estimate the value of each state-action pair in a multi-agent context. The hyperparameter \(\alpha\) controls the clipping bounds and balances the trade-off between different components of the objective function.

The clipping mechanism ensures that policy updates remain within a reasonable range, thereby avoiding large updates that could destabilize the training process. It is defined as:
\begin{equation}
\begin{aligned}
c(\rho_i^k, \rho_i^{k+1}) &= \text{clip}(\rho_i^k(a_i^k | o_i^k), x_1, x_2) \\
&\cdot \text{clip}(\rho_i^{k+1}(a_i^{k+1} | o_i^{k+1}), x_1', x_2'),
\end{aligned}
\label{eq:clip_formula}
\end{equation}
where \( x_1, x_2 \) and \( x_1', x_2' \) are clipping bounds that regulate the magnitude of policy updates, by incorporating future state-action pairs, agents can account for the long-term impact of their actions, thereby enhancing global coordination and overall policy quality. This integration leverages future state-action information to support more informed and stable policy updates. Our approach enhances coordination, accelerates training, and stabilizes learning, effectively addressing key challenges commonly faced by traditional MARL methods.

\subsection{Dynamic Adaptive Asymmetric Clipping Mechanism}

In the previous section, we introduced a multi-agent reflective mechanism in which the policy optimization objective considers not only the reward at the current time step but also the potential influence of future actions. However, under the reflection mechanism, the two importance sampling ratios $\rho^k$ and $\rho^{k+1}$ co-occur in the loss, making the objective more sensitive to policy drift. To tackle this issue, we introduce a Dynamic Asymmetric Clipping Mechanism, which leverages approximate estimates of the KL divergence to dynamically adjust the bounds $x_1$ and $x_2$, thereby enhancing the stability and constraint of policy updates.

\noindent \textbf{Theoretical Analysis and Enhancement of Clipping Mechanisms Using KL Divergence.} 
The standard clipping mechanism in PPO can be viewed as a heuristic approximation to the TV distance, designed to constrain the magnitude of policy updates. Specifically, the TV distance between the old and new policies can be expressed as:
\begin{equation}
D_{\text{TV}}(\pi_{\text{old}} || \pi_{\text{new}}) 
= \frac{1}{2} \sum_{a} |\pi_{\text{old}} - \pi_{\text{new}}| 
= \frac{1}{2} \, \mathbb{E}_{\pi_{\text{old}}} \left| 1 - \frac{\pi_{\text{new}}}{\pi_{\text{old}}} \right|.
\end{equation}

To ensure stable policy updates, PPO constrains the importance sampling ratio \(\left|1-\frac{\pi_{\text{new}}}{\pi_{\text{old}}}\right|\) within a fixed range, effectively bounding policy deviation through a surrogate loss that approximates total variation (TV) distance. While TV distance controls the magnitude of policy shifts, it only reflects the scalar distance between distributions and overlooks the geometric structure of the policy space. 

In contrast, KL divergence provides an asymmetric and differentiable measure that more accurately captures directional changes in policies, making it better suited for gradient-based optimization, particularly in high-dimensional or complex multi-agent reinforcement learning. Formally, the KL divergence between the old and new policies is defined as:
\begin{equation}
D_{\text{KL}}(\pi_{\text{old}} || \pi_{\text{new}}) 
= \sum_{a} \pi_{\text{old}} \cdot \log \frac{\pi_{\text{old}}}{\pi_{\text{new}}} 
= \mathbb{E}_{\pi_{\text{old}}} \left( \log \frac{\pi_{\text{old}}}{\pi_{\text{new}}} \right).
\label{eq:kl_divergence}
\end{equation}

Although Eqn.~(\ref{eq:kl_divergence}) offers a rigorous definition, directly constraining \( \log \frac{\pi_{\text{old}}}{\pi_{\text{new}}} \) can lead to incorrect behavior, as it does not provide a proper unbiased estimator and may even yield negative values, violating the non-negativity of KL divergence. To address this, we first establish the following identity, which forms the foundation for an unbiased KL estimator:
\begin{align}
 &E_{\pi_{\text{old}}} \left( -\log \frac{\pi_{\text{new}}}{\pi_{\text{old}}} + \frac{\pi_{\text{new}}}{\pi_{\text{old}}} - 1 \right) \nonumber \\
 =& E_{\pi_{\text{old}}} \left( \log \frac{\pi_{\text{old}}}{\pi_{\text{new}}} + \frac{\pi_{\text{new}}}{\pi_{\text{old}}} - 1 \right) \nonumber \\
 =& D_{\text{KL}}(\pi_{\text{old}} || \pi_{\text{new}})  + \sum \pi_{\text{new}} - \sum \pi_{\text{old}} \nonumber \\
 =&  D_{\text{KL}}(\pi_{\text{old}} || \pi_{\text{new}}). 
 \label{eq:unbaised_ev}
\end{align}
where the last equality follows from the fact that both policies are normalized distributions, i.e. \(\sum_a \pi_{\text{new}} = \sum_a \pi_{\text{old}} = 1.\)  

This motivates the definition of \(f(x) = x - 1 - \log x\),
where \(x = \frac{\pi_{\text{new}}}{\pi_{\text{old}}}\). Crucially, \(f(x)\) satisfies \(E_{\pi_{\text{old}}}[f(x)] = D_{\text{KL}}(\pi_{\text{old}} || \pi_{\text{new}})\), thereby providing an unbiased and theoretically sound surrogate for KL divergence that is inherently non-negative. This property allows \(f(x)\) to serve as a robust alternative to the heuristic TV-based clipping mechanism in PPO and forms the basis of our enhanced clipping strategy.

\begin{figure}[t]
    \centering
    \includegraphics[width=0.47\textwidth]{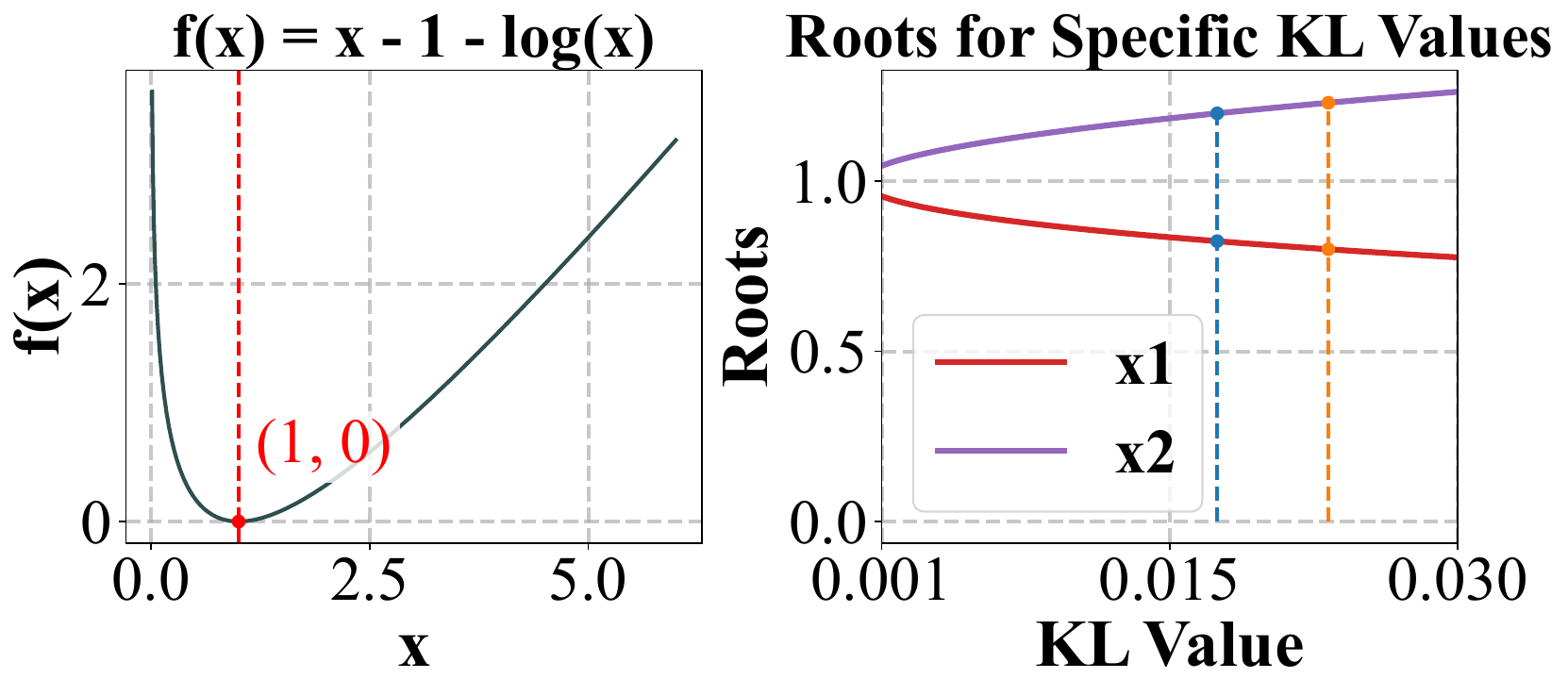}
    \caption{Root solving for dynamic clipping bounds based on KL targets.}
    \label{fig:kl_root}
\end{figure}
\begin{theorem}
We define the unbiased estimation function \( f(x) = x - 1 - \log(x) \), derived from Eqn.~(\ref{eq:unbaised_ev}), where \(x = \frac{\pi_{\text{new}}}{\pi_{\text{old}}}\). This function satisfies:
\begin{enumerate}
    \item \(f(x) \geq 0 \) for all \( x > 0 \) with equality if and only if \(x = 1\).
    \item \(f(x)\) is convex in \(x\), ensuring stable optimization when used for clipping.
\end{enumerate}
\end{theorem}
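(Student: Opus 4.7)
The statement is a pair of standard one-variable calculus facts about $f(x)=x-1-\log(x)$ on $(0,\infty)$, so my plan is to establish both claims directly from the first and second derivatives of $f$, together with the value $f(1)=0$.

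For part (1), I would compute $f'(x)=1-\tfrac{1}{x}$ and observe that $f'(x)<0$ for $x\in(0,1)$, $f'(x)=0$ at $x=1$, and $f'(x)>0$ for $x>1$. This shows that $f$ is strictly decreasing on $(0,1)$ and strictly increasing on $(1,\infty)$, so $x=1$ is the unique global minimizer on $(0,\infty)$. Evaluating $f(1)=1-1-\log 1=0$ then yields $f(x)\ge 0$ for all $x>0$, with equality precisely at $x=1$. An alternative I might mention is to regard this as the standard inequality $\log x \le x-1$, which follows from concavity of $\log$ via a first-order tangent-line argument at $x=1$.

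For part (2), I would differentiate again to obtain $f''(x)=\tfrac{1}{x^{2}}$, which is strictly positive for every $x>0$. Since $f''>0$ on the open interval $(0,\infty)$, a standard characterization of convexity for twice-differentiable functions implies that $f$ is (strictly) convex on $(0,\infty)$. Because strict convexity additionally gives a unique minimizer, this also reconfirms the equality condition in part (1) without appealing separately to the sign analysis of $f'$.

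There is no real obstacle here: both properties reduce to inspection of $f'$ and $f''$. The only minor point worth flagging is the domain restriction $x>0$, which is needed both for $\log x$ to be defined and to ensure $f''(x)=1/x^{2}$ stays positive; this matches the setting $x=\pi_{\text{new}}/\pi_{\text{old}}>0$ from the surrounding derivation, so the hypothesis is automatic in the paper's usage.
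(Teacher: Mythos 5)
Your proposal is correct and follows essentially the same route as the paper's proof: both parts are established by computing $f'(x)=1-\tfrac{1}{x}$ to locate the unique global minimum at $x=1$ with $f(1)=0$, and $f''(x)=\tfrac{1}{x^{2}}>0$ for convexity. The extra remarks about the tangent-line inequality $\log x\le x-1$ and the domain restriction $x>0$ are fine but do not change the argument.
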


\begin{proof}
(1) \textbf{Non-negativity:}  
We first compute the derivative of \(f(x)\):
\[
f'(x) = 1 - \frac{1}{x}.
\]
For \(x > 1\), \(f'(x) > 0\), and for \(0 < x < 1\), \(f'(x) < 0\). Therefore, \(f(x)\) decreases on \((0,1]\) and increases on \([1, \infty)\), attaining its global minimum at \(x = 1\). Evaluating \(f(1)\):
\(f(1) = 1 - 1 - \log 1 = 0\). Thus, \(f(x) \geq 0\) for all \(x > 0\), with equality only at \(x=1.\)

(2) \textbf{Convexity:}  
We compute the second derivative:
\[
f''(x) = \frac{1}{x^2} > 0, \quad \forall x > 0,
\]
which implies that \(f(x)\) is convex. Convexity ensures that local clipping constraints defined by \(f(x)\) form a valid and stable bound in optimization.

\end{proof}

\noindent \textbf{Advantages of the unbiased estimator.}  
The proposed function \(f(x)\) exhibits several advantages over the standard TV-based clipping mechanism in PPO:
\begin{itemize}
    \item \textbf{Non-negativity and convexity:} \(f(x)\) avoids invalid negative divergence values and provides a convex structure for stable optimization.
    \item \textbf{Compatibility with KL-based policy optimization:} By aligning clipping directly with KL divergence, the method captures both the magnitude and directional information of policy updates, improving optimization in high-dimensional reinforcement learning settings.
\end{itemize}


\begin{figure*}[t]
    \centering
    \includegraphics[width=0.9\textwidth]{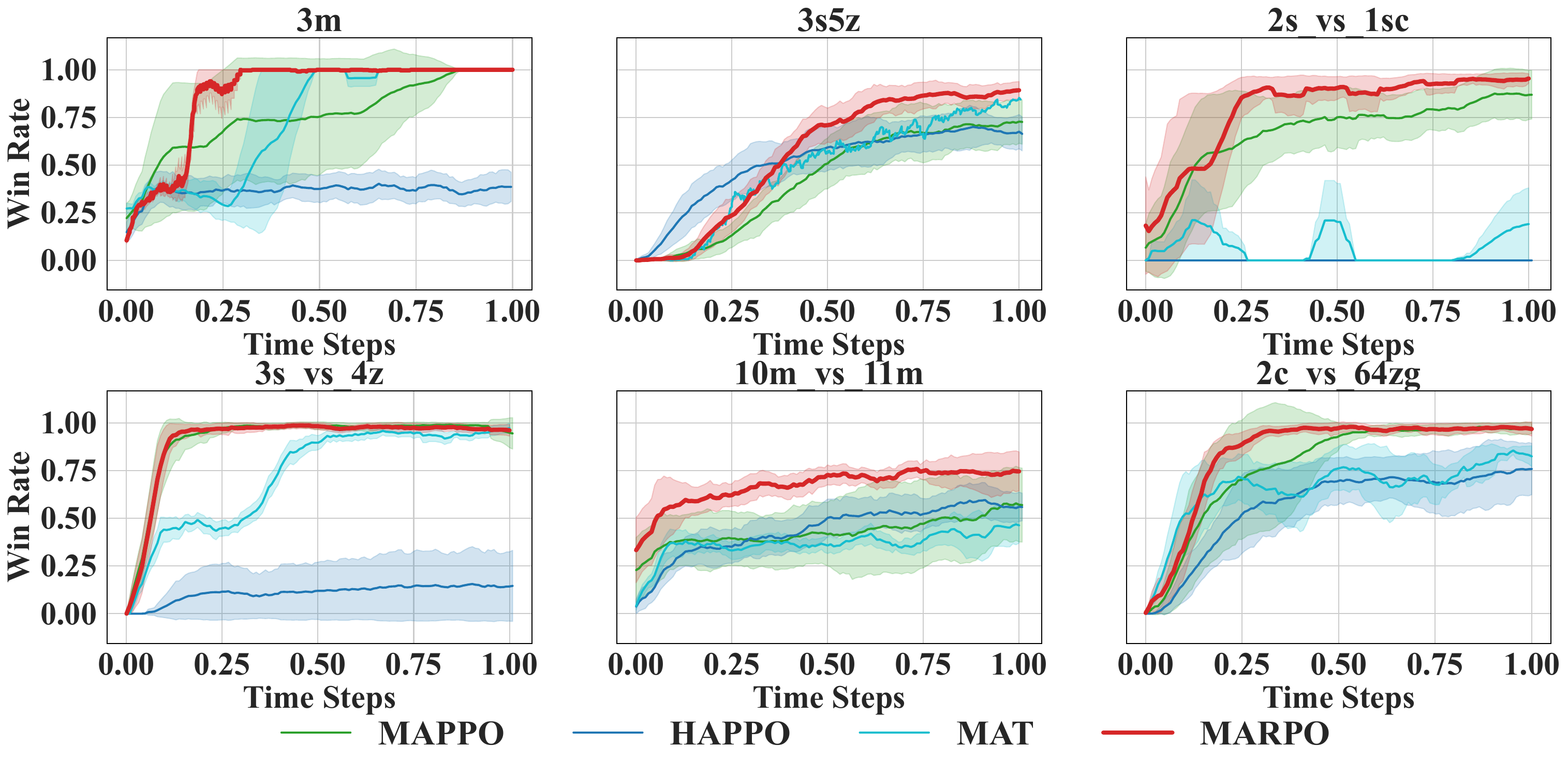}
\caption{Win rate curves for the main experiments on SMAC-Hard environments. Shaded regions represent the standard deviation across five random seeds. The X-axis denotes timesteps, and the Y-axis denotes win rate. This experiment primarily compares policy-based methods with baselines.}

    \label{fig:performance_curves}
\end{figure*}

\begin{figure*}[t]
    \centering
    \includegraphics[width=0.9\textwidth]{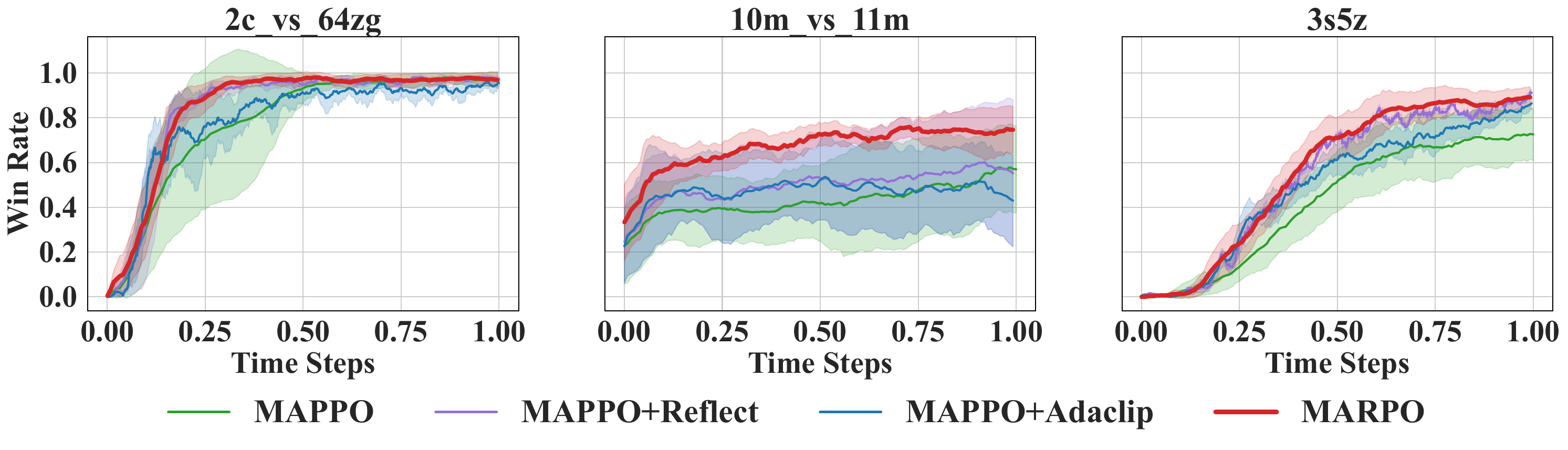}
    \caption{Win rate curves from the ablation study on SMAC-Hard environments, validating the impact of different modules. Shaded regions represent the standard deviation across five random seeds. }

    \label{fig:abation}
\end{figure*}

\noindent \textbf{Construction of Clipping Bounds.} 
To enhance the stability and adaptability of reflective policy optimization, we propose a novel clipping mechanism that dynamically determines asymmetric clipping bounds based on the inverse of a function \( f(x) \) related to the true KL divergence. Specifically, the true KL divergence between the current and the old policy is computed directly, and the function \( f(x) \) is used to find the inverse, yielding two roots that define the clipping interval. This target KL value is subsequently smoothed using an exponential moving average (EMA), forming the dynamic KL constraint. 
\begin{equation}
D_{\mathrm{KL_t}}^{\mathrm{target}} = \beta \cdot D_{\mathrm{KL_{t-1}}}^{\mathrm{target}} + (1 - \beta) \cdot D_{\mathrm{KL_t}}.
\label{eq:clip_bound}
\end{equation}

In contrast to fixed or heuristically scheduled KL bounds, our formulation induces a self-adjusting trust region that evolves automatically with the policy. The clipping thresholds are obtained by numerically solving for the roots of a KL-based approximation function, as introduced in Eqn.~(\ref{eq:clip_bound}), yielding an asymmetric range $[x_1, x_2]$ that adapts continuously to the current policy and observed KL statistics. In our approach, these dynamic thresholds are derived analytically from the target KL divergence and updated online, which is further smoothed via an EMA mechanism to ensure temporal consistency and reduce short-term variance. As a result, the method eliminates the need for manually chosen KL bounds or hand-crafted annealing schedules. It offers a principled, stable, and adaptive clipping mechanism---distinct from PPO’s symmetric and fixed-range clipping (see Fig.~\ref{fig:kl_root})---while maintaining robust performance across diverse environments and training regimes.


\begin{figure*}[t]
    \centering
    \includegraphics[width=0.95\textwidth]{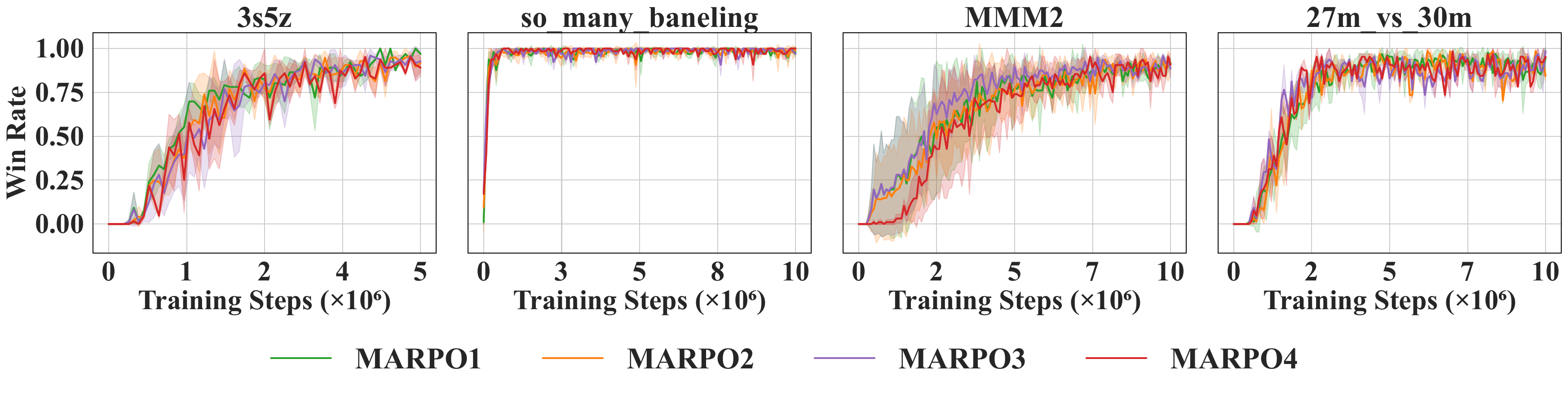}
    \caption{Hyperparameter Sensitivity Study: Impact of KL Bias and Sliding Average Update Rate (\(\beta\)) on Win Rate in SMAC Environments. Shaded regions represent the standard deviation across five random seeds. }
    \label{fig:kl_sensitivity}
\end{figure*}

\begin{figure*}[t]
    \centering
    \includegraphics[width=0.95\textwidth]{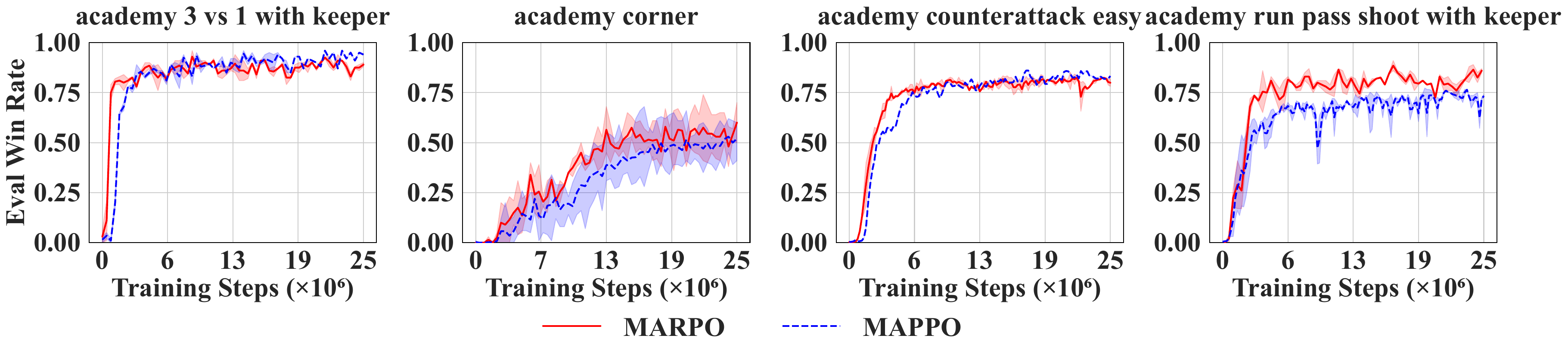}
    \caption{Win rate curves for the main experiments on GRF environments. Shaded regions represent the standard deviation across three random seeds. The X-axis denotes timesteps, and the Y-axis denotes win rate.}

    \label{fig:GRF_exp}
\end{figure*}

\begin{figure*}[t]
    \centering
    \includegraphics[width=0.95\textwidth]{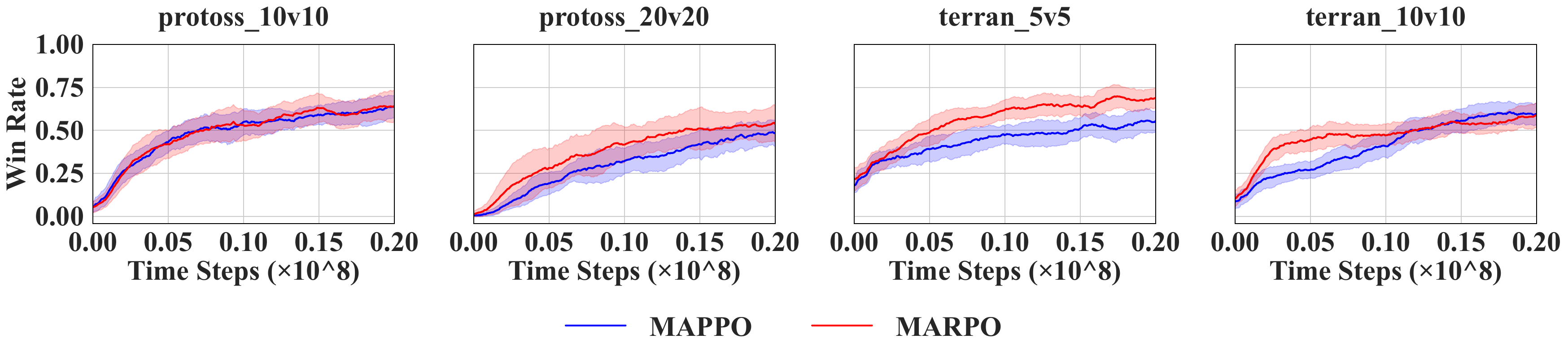}
    \caption{Win rate curves for the main experiments on SMACv2 environments. Shaded regions represent the standard deviation across three random seeds. The X-axis denotes timesteps, and the Y-axis denotes win rate.}

    \label{fig:SMACv2_exp}
\end{figure*}

\section{Experiments}

\begin{table*}[ht]
\centering
\setlength{\tabcolsep}{4pt}  
\begin{tabular}{lccccccc}
\toprule
Env Name & MAPPO  & HAPPO  & LDSA & QMIX  & QPLEX & MAT & MARPO (Ours) \\
\midrule
3m           & 99.2 ± 4.7 & 37.3 ± 8.7 & 99.6 ± 1.5 & 99.8 ± 0.8 & 5.7 ± 19.3 & 99.9 ± 0.6 & \textbf{100.0 ± 0.0} (\checkmark) \\
3s5z         & 71.0 ± 1.7 & 68.1 ± 1.1(\checkmark) & 12.3 ± 0.7 & 34.9 ± 0.8 & 26.4 ± 3.0 & 79.7 ± 0.5 & \textbf{87.2 ± 0.4}  \\
2s\_vs\_1sc   & 85.2 ± 13.5 & 0.0 ± 0.0 & 93.9 ± 9.5 & 61.4 ± 36.7 & 81.5 ± 15.5 & 10.3 ± 15.0 & \textbf{94.8 ± 4.2} (\checkmark) \\
3s\_vs\_4z    & \textbf{97.8 ± 7.5} & 14.4 ± 18.6 & 95.0 ± 6.5 & 75.8 ± 20.2 & 78.8 ± 22.8 & 94.4 ± 6.3 & 97.0 ± 3.2 (\checkmark) \\
10m\_vs\_11m  & 53.0 ± 4.7 & 57.6 ± 1.0 & 45.7 ± 2.6 & 65.2 ± 1.1 & 0.4 ± 0.0 & 43.2 ± 1.7 & \textbf{74.3 ± 1.2} (\checkmark) \\
2c\_vs\_64zg  & 97.3 ± 3.3 & 73.3 ± 16.9 & 87.0 ± 9.5 & 69.7 ± 32.3 & 35.1 ± 32.5 & 79.7 ± 12.3 & \textbf{97.4 ± 3.3} (\checkmark) \\
\bottomrule
\end{tabular}
\caption{Performance comparison across 6 SMAC-Hard environments. The values represent the average performance over the last 2 million steps. A check mark (\checkmark) in parentheses indicates the highest value for the first 2.5 million steps. The values are reported as mean ± standard deviation, with the variance represented as $10^{-2}$.}
\label{tab:performance_summary}
\end{table*}

\subsection{Experiment Setup and Evaluations}
\textbf{StarCraft II Multi-Agent Challenge.} SMAC is a widely used benchmark for cooperative MARL, featuring diverse combat scenarios based on the StarCraft II engine. Agents must collaborate to defeat built-in AI opponents. However, the original SMAC's limited opponent diversity often leads to overfitting and poor generalization. To address this, SMAC-Hard introduces mixed scripted opponents, randomized strategy switching, and a self-play interface, enhancing adversarial variability and robustness evaluation. With its open-source implementation publicly available, SMAC-Hard establishes a new benchmark for evaluating the robustness, adaptability, and strategic coverage of MARL algorithms in dynamic, partially observable environments. We conduct experiments on both SMAC and SMAC-Hard, covering scenarios of varying difficulty and asymmetry to assess the adaptability and generalization of our method.

\noindent\textbf{Main Experiment Setup.} Experiments are conducted on a subset of scenarios from the SMAC-Hard benchmark, selected to reflect a broad range of task difficulty and coordination complexity.   
We compare our method against several strong baselines on SMAC-Hard, including value-based methods (\textbf{QMIX}~\citep{rashid2020monotonic}, \textbf{LDSA}~\citep{yang2022ldsa}, \textbf{QPLEX}~\citep{wang2021qplex}) and policy-based methods (\textbf{MAPPO}~\citep{yu2022surprising}, \textbf{HAPPO}~\citep{kuba2021trust},  and  \textbf{MAT}~\citep{wen2022multiagent}), providing a comprehensive evaluation across diverse algorithmic paradigms. All agents are trained for 10 million environment steps. For evaluation, we report the average win rate over the final 2 million training steps. To ensure fair comparisons, all methods adopt identical neural network architectures and optimization settings. Additionally, we conduct supplementary experiments on the original SMAC benchmark to further evaluate the generalization ability of our approach.

\subsection{Main Results on SMAC-Hard}

Our method consistently outperforms existing baselines across diverse SMAC-Hard scenarios, achieving higher win rates and improved robustness (Table~\ref{tab:performance_summary}). This gain stems from better utilization of low-probability, high-impact samples during updates, mitigating issues in clipped objectives. As shown in Figure~\ref{fig:performance_curves}, our approach exhibits stable convergence and enhanced sample efficiency in challenging multi-agent tasks.

\subsection{Analysis of Algorithm Performance}
\noindent\textbf{Ablation Experiments.} We conduct ablation experiments to evaluate the contribution of each module within our framework. Additionally, we analyze the computational overhead to assess the efficiency and scalability of the proposed algorithm. As shown in Figure~\ref{fig:abation}, we conducted ablation experiments in most environments by removing each module individually. The results demonstrate that removing any module results in performance degradation, indicating that each component contributes to the overall effectiveness of the algorithm. This highlights that all modules play essential roles across different scenarios, and their combined use is crucial for achieving improved performance.

\noindent\textbf{Sensitivity Analysis of Hyperparameters.}  In our method, the KL divergence is computed as the true KL between the new and old policy distributions and averaged over trajectories. Instead of imposing an upper bound, we maintain a lower bound to prevent the KL from collapsing to zero. This value is tracked using an Exponential Moving Average (EMA) with two hyperparameters: the KL bias and the update rate~$\beta$ defined in Eqn.~(\ref{eq:clip_bound}). To examine the algorithm’s sensitivity to these hyperparameters, we test several representative settings and analyze their effects on stability, convergence, and sample efficiency.

Figure~\ref{fig:kl_sensitivity} summarizes the results across four configurations---MARPO1 (0.05, 0.05), MARPO2 (0.08, 0.08), MARPO3 (0.10, 0.08), and MARPO4 (0.10, 0.01)---where each pair denotes the KL bias and EMA update rate used in the \emph{second}, KL-guided clipping term, while the standard PPO-style ratio clipping threshold is kept fixed across all runs. The curves are almost indistinguishable across these configurations, indicating that once the KL-based clipping is enabled, MARPO is fairly insensitive to the precise choice of these two hyperparameters in our tested range. This suggests that MARPO does not rely on careful tuning of the KL-related clipping term for stable performance, and a more exhaustive joint exploration of both clipping mechanisms is left for future work.

\subsection{Experiments on Google Football and SMACv2}

We evaluate MARPO on two cooperative multi-agent benchmarks: Google Research Football (GRF) and SMACv2. GRF features fast-paced, partially observable gameplay that stresses real-time coordination among agents, while SMACv2 introduces stochastic unit behaviors, delayed rewards, and non-stationary opponents, making it substantially harder than the original SMAC and a good test of robustness.

On GRF, we focus on generalization and sample efficiency. As shown in Figure~\ref{fig:GRF_exp}, MARPO attains a clear early performance lead over MAPPO across multiple scenarios, indicating that the reflective mechanism helps extract useful trajectory information more quickly and accelerates policy adaptation. Throughout training, MARPO maintains stable performance that remains competitive with MAPPO, suggesting that the early gains do not come at the cost of instability. On SMACv2, we assess robustness under stochastic and non-stationary dynamics. Figure~\ref{fig:SMACv2_exp} shows that MARPO yields consistently higher win rates and more stable learning than MAPPO on a range of maps, and often converges faster despite the increased task difficulty and noisy credit assignment.

Taken together, the GRF and SMACv2 results show that MARPO improves early-phase sample efficiency while preserving strong robustness in challenging multi-agent environments, supporting its effectiveness as a general-purpose on-policy method for complex cooperative control in realistic, large-scale, and highly non-stationary settings with diverse tasks, agents, difficulty levels, and evaluation scenarios, where its reflective mechanism and adaptive clipping strategy enable stable learning dynamics and reliable coordination across heterogeneous benchmarks.

\section{Conclusion}


We propose Multi-Agent Reflective Policy Optimization (MARPO), a framework that enhances sample efficiency through a reflection mechanism and stabilizes training via a KL-guided asymmetric clipping strategy. Experiments on StarCraft II and Google Research Football demonstrate that MARPO achieves competitive performance across diverse settings and yields clear gains in highly stochastic, non-stationary scenarios where optimization is more challenging. On relatively simple tasks, where MAPPO is already strong, the improvements from reflection are understandably less pronounced, but MARPO preserves comparable performance while offering stable learning dynamics. Future work includes exploring a broader range of hyperparameters, further refining the reflection mechanism, and extending MARPO to complex, large-scale tasks to fully exploit its advantages in challenging multi-agent environments.



\bibliography{aaai2026}

\end{document}